\newtheorem{theorem}{Theorem}
\newtheorem{lemma}{Lemma}
\newcommand{\tab}{\hspace{1cm}}
\newcommand{\E}{\mathbb{E}}
\newcommand{\myFigureWidth}{8.3cm}
\title{An Algorithm for Online K-Means Clustering}
\author{
Edo Liberty\thanks{edo@yahoo-inc.com, Yahoo Labs, New York, NY}
\and
Ram Sriharsha\thanks{harshars@yahoo-inc.com, Yahoo Labs, Sunnyvale, CA}
\and
Maxim Sviridenko\thanks{sviri@yahoo-inc.com,Yahoo Labs, New York, NY}
}
\date{}
\begin{document} 
%\twocolumn[
\maketitle

% It is OKAY to include author information, even for blind
% submissions: the style file will automatically remove it for you
% unless you've provided the [accepted] option to the icml2015
% package.

%\vskip 0.3in
%]

\begin{abstract}
This paper shows that one can be competitive with the $k$-means objective while operating online.
In this model, the algorithm receives vectors $v_1,\ldots,v_n$ one by one in an arbitrary order. 
For each vector $v_t$ the algorithm outputs a cluster identifier before receiving $v_{t+1}$.
Our online algorithm generates $\tilde{O}(k)$ clusters whose $k$-means cost is $\tilde{O}(W^*)$ where $W^*$ is the optimal $k$-means cost using $k$ clusters.\footnote{The notation $\tilde{O}(\cdot)$ suppresses poly-logarithmic factors.}
We also show that, experimentally, it is not much worse than $k$-means++ while operating in a strictly more constrained computational model.
\end{abstract}

%%%% Text Abstract 
%%%This paper shows that one can be competitive with the k-means objective while operating online.
%%%In this model, the algorithm receives vectors v_1,...,v_n one by one in an arbitrary order. 
%%%For each vector the algorithm outputs a cluster identifier before receiving the next one.
%%%Our online algorithm generates ~O(k) clusters whose k-means cost is ~O(W*).
%%%Here, W* is the optimal k-means cost using k clusters and ~O suppresses poly-logarithmic factors. 
%%%We also show that, experimentally, it is not much worse than k-means++ while operating in a strictly more constrained computational model.

\section{Introduction}

One of the most basic and well-studied optimization models in unsupervised Machine Learning is $k$-means clustering. 
In this problem we are given the set $V$ of $n$ points (or vectors) in Euclidian space. 
The goal is to partition $V$ into $k$ sets called clusters $S_1,\dots, S_k$ and choose one cluster center $c_i$ for each cluster $S_i$ to minimize 
\[
\sum_{i=1}^{k} \sum_{v \in S_i} ||v-c_i||_2^2.
\]
In the standard offline setting, the set of input points is known in advance and the data access model is unrestricted.
Even so, obtaining provably good solutions to this problem is difficult. See Section~\ref{related}.

In the streaming model the algorithm must consume the data in one pass and is allowed to keep only a small (typically constant or poly-logarithmic in $n$) amount of information. 
Nevertheless, it must output its final decisions when the stream has ended. For example, the location of the centers for $k$-means.
This severely restricted data access model requires new algorithmic ideas. See Section~\ref{related} for prior art. 
Notice that, in the streaming model, the assignment of individual points to clusters may become available only in hindsight.

In contrast, the online model of computation does not allow to postpone clustering decisions. 
In this setting, an a priori unknown number of points arrive one by one in an arbitrary order.
When a new point arrives the algorithm must either put it in one of the existing clusters or open a new cluster (consisting of a single point). 
Note that this problem is conceptually non trivial even if one could afford unbounded computational power at every iteration.
This is because the quality of current choices depend on the unknown (yet unseen) remainder of the stream.

In this paper, we consider the very restricted setting in the intersection of these two models.
We require the algorithm outputs a single cluster identifier for each point online while using space and time at most poly-logarithmic in the length of the stream.
This setting is harder than the streaming model. 
On the one hand, any space efficient online algorithm is trivially convertible to a streaming algorithm.
One could trivially keep sufficient statistics for each cluster such that the centers of mass could be computed at the end of the stream.
The computational and space overhead are independent of the length of the stream.
On the other hand, the online problem resists approximation even in one dimension and $k=2$.

Consider the stream where $v_1 = 0$ and $v_2 = 1$ (acting as one dimensional vectors). 
Any online clustering algorithm must assign them to different clusters. 
Otherwise, the algorithm cost is $1/2$ and the optimal is cost is trivially $0$.
If the the algorithm assigns $v_1$ and $v_2$ to different clusters, the third point might be  $v_3 = c$ for some $c \gg 1$.
At this point, the algorithm is forced to assign $v_3$ to one of the existing clusters incurring cost of $\Omega(c)$ which is arbitrarily larger than the optimal solution of cost $1/2$.
This example also proves that any online algorithm with a bounded approximation factor (such as ours) must create strictly more than $k$ clusters. 

In this work we provide algorithms for both online $k$-means and {\it semi-online} $k$-means. 
In the {\it semi-online} model we assume having a lower bound, $w^*$, for the total optimal cost of $k$-means, $W^*$, as well as an estimate for $n$, the length of the stream. 
Algorithm~\ref{algSemiOnline} creates at most $$O(k\log n \log (W^*/w^*))$$ clusters in expectation and has an expected objective value of $O(W^*)$.
From a practical viewpoint, it is reasonable to assume having rough estimates for  $w^*$ and $n$.
Since the dependence on both estimates is logarithmic, the performance of the {\it semi-online} algorithm will degrade significantly only if our estimates are wrong by many orders of magnitude.
In the fully online model we do not assume any prior knowledge.
Algorithm~\ref{algFullyOnline} operates in that setting and opens a comparable number of clusters to Algorithm~\ref{algSemiOnline}. 
But, its approximation factor guarantee degrades by a $\log n$-factor. 

\subsection{Motivation}
%% motivation from ML

In the context of machine learning, the results of $k$-means were shown to provide powerful unsupervised features \cite{CoatesNL11} on par, sometimes, with neural nets for example.
This is often referred to as (unsupervised) feature learning.
Intuitively, if the clustering captures most of the variability in the data, assigning a single label to an entire cluster should be pretty accurate.
It is not surprising therefore that cluster labels are powerful features for classification. 
In the case of online machine learning, these cluster labels must also be assigned online.
The importance of such an online $k$-means model was already recognized in machine learning community \cite{ChoromanskaM12, DasguptaCSE291}.  

%% motivation from IR 

For information retrieval, \cite{CharikarCFM97} investigated the incremental $k$-centers problem.
They argue that clustering algorithms, in practice, are often required to be online.
We observe the same at Yahoo. 
For example, when suggesting news stories to users, we want to avoid suggesting those that are close variants of those they already read.
Or, conversely, we want to suggest stories which are a part of a story-line the user is following.
In either scenario, when Yahoo receives a news item, it must immediately decide what cluster it belongs to and act accordingly.

\subsection{Prior Art}\label{related}

%% offline 
In the offline setting where the set of all points is known in advance, Lloyd's algorithm \cite{Lloyd82} provides popular heuristics. 
It is so popular that practitioners often simply refer to it as $k$-means. 
Yet, only recently some theoretical guaranties were proven for its performance on ``well clusterable" inputs \cite{OstrovskyRSS12}.
The $k$-means++ \cite{ArthurV07} algorithm provides an expected  $O(\log(k))$ approximation or an efficient seeding algorithm. 
A well known theoretical algorithm is due to Kanungo et al. \cite{KanungoMNPSW02}. 
It gives a constant approximation ratio and is based on local search ideas popular in the related area of design and analysis of algorithms for facility location problems, e.g., \cite{AryaGKMMP04}.
Recently, \cite{AggarwalDK09} improved the analysis of \cite{ArthurV07} and gave an adaptive sampling based algorithm with constant factor approximation to the optimal cost.
In an effort to make adaptive sampling techniques more scalable, \cite{BahmaniMVKV12} introduced $k$-means$\|$ which reduces the number of passes needed over the data and enables improved parallelization.

%% streaming
The streaming model was considered by \cite{AilonJM09} and \cite{SWM} and later by \cite{AckermannMRSLS12}.
They build upon adaptive sampling ideas from \cite{ArthurV07,BahmaniMVKV12} and branch-and-bound techniques from \cite{GuhaMMMO03}.

The first (to our knowledge) result in online clustering dates back the $k$-centers result of \cite{CharikarCFM97}.
For $k$-means an Expectation Maximization (EM) approach was investigated by \cite{LiangK09}. 
Their focus was on online EM as a whole but their techniques include online clustering.
They offer very encouraging results, especially in the context of machine learning. To the best of our understanding, however, their techniques do not extend to arbitrary input sequences.
In contrast, the result of \cite{ChoromanskaM12} provides provable results for the online setting in the presence of base-$k$-means algorithm as experts.

%% facility location

A closely related family of optimization problems is known as facility location problems. 
Two standard variants are the uncapacitated facility location problem (or the simple plant location problem in the Operations Research jargon) and the k-median problem. 
These problems are well-studied both from computational and theoretic viewpoints (a book \cite{DreznerH02} and a survey \cite{Vygen05} provide the  background on some of the aspects in this area). 
Meyerson \cite{Meyerson01} suggested a simple and elegant algorithm for the online uncapacitated facility location with competitive ratio of $O(\log n)$. 
Fotakis \cite{Fotakis08} suggested a primal-dual algorithm with better performance guarantee of $O(\log n/\log \log n)$. 
Anagnostopoulos et al. \cite{AnagnostopoulosBUH04} considered a different set of algorithms based on hierarchical partitioning of the space and obtained similar competitive ratios. 
The survey \cite{Fotakis11} summarizes the results in this area. As a remark, \cite{CharikarCFM97}  already considered connections between facility location problems and clustering.
Interestingly, their algorithm is often referred to as ``the doubling algorithm" since the cluster diameters double as the algorithm receives more points. 
In our work the facility location cost is doubled which is technically different but intuitively related.

\section{Semi-Online $k$-means Algorithm}\label{secSemiOnline}
We begin with presenting the {\it semi-online} algorithm. 
It assumes knowing the number of vectors $n$ and some lower bound $w^*$ for the value of the optimal solution.
These assumptions make the algorithm slightly simpler and the result slightly tighter.
Nevertheless, the {\it semi-online} online already faces most of the challenges faced by the {\it fully online} version.
In fact, proving the correctness of the online algorithm (Section \ref{secFullyOnline}) would require only minor adjustments to the proofs in this section.

The algorithm uses ideas from the online facility location algorithm of Meyerson \cite{Meyerson01}.
The intuition is as follows; think about $k$-means and a facility location problem where the service costs are squared Euclidean distances.
For the facility cost, start with $f_1$ which is known to be too low. 
By doing that the algorithm is ``encouraged" to open many facilities (centers) which keeps the service costs low.
If the algorithm detect that too many facilities were opened, it can conclude that the current facility cost is too low. 
It therefore doubles the facility cost of opening future facilities (centers).
It is easy to see that the facility cost cannot be doubled many times without making opening new clusters prohibitively expensive.
In Algorithm \ref{algFullyOnline} we denote the distance of a point $v$ to a set $C$ as $D(v, C) = \min_{c \in C}\|v - c\|$.
As a convention, if $C = \emptyset$ then $D(v, C) = \infty$ for any $v$.

\begin{algorithm}
\begin{algorithmic}
\STATE {\bf input:} $V$, $k$, $w^*$, $n$
\STATE $C \gets \emptyset$
\STATE $r \gets 1$; $q_1 \gets 0$; $f_1 \gets w^*/k\log(n)$
\FOR {$v \in V$}
	\STATE {\bf with probability} $p = \min(D^2(v, C)/f_r,1)$
	\STATE \tab $C \gets C \cup \{v\}; q_r\gets q_r+1$
	\IF {$q_r \ge 3 k (1+ \log_{} (n))$}
		\STATE $r \gets r+1$; $q_r \gets 0$; $f_r \gets 2\cdot f_{r-1}$
	\ENDIF
	\STATE {\bf yield:} $c = \arg\min_{c \in C}\|v - c\|^2$
\ENDFOR
\caption{semi-online $k$-means algorithm}\label{algSemiOnline}
\end{algorithmic}
\end{algorithm}

Consider some optimal solution consisting of clusters $S_1^*,\dots,S^*_k$ with cluster centers $c_1^*,\dots,c^*_k$. Let
$$W^*_i=\sum_{v\in S^*_i} ||v-c_i^*||_2^2$$
be the cost of the $i$-th cluster in the optimal solution and 
 $W^*=\sum_{i=1}^kW^*_i$ be the value of the optimal solution. 
Let $A^*_i$ be the average squared distance to the cluster center from a vectors in the $i$-th optimal cluster. 
\[
A^*_i = \frac{1}{|S^*_i|}\sum_{v \in C^*_i} \|v - c_i^*\|_2^2 = \frac{W^*_i}{|S^*_i|}\ .
\]
We define a partition of the cluster $S_i^*$ into subsets that we call {\it rings}: 
\[
S^{*}_{i,0}=\{v\in S^*_i: ||v-c^*_i||^2_2\le A^*_i\}
\]
and for $1 \le \tau \le \log n$
\[
S^{*}_{i,\tau}=\left\{v\in S^*_i: ||v-c^*_i||^2_2\in ( 2^{\tau-1}A^*_i,2^{\tau}A^*_i]\right\} \ .
\]

%\begin{eqnarray}
%S^{*}_{0,i}&=&\{v\in C^*_i: ||v-z^*_i||^2_2\le A^*_i\},\\
%S^{*}_{i,\tau}&=&\left\{v\in C^*_i: ||v-z^*_i||^2_2\in ( 2^{\tau-1}A^*_i,2^{\tau}A^*_i]\right\} \mbox{ for }\tau\ge 1.
%\end{eqnarray}
Note that we consider only values of $\tau \le \log n$ since $S^{*}_{i,\tau}=\emptyset $  for $\tau > \log_{} (|S_i^*|)$.
To verify  assume the contrary and compute $A^*_i$.

\begin{theorem}\label{thmNumberOfClusterSemiOnline}
Let $C$ be the set of clusters defined by Algorithm~\ref{algSemiOnline}. 
Then 
$$\E[ |C| ]=O\left(k \log_{} n \log_{} \frac{W^*}{w^*}\right) \ .$$
\end{theorem}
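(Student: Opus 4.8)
The plan is to bound the expected number of clusters opened within each "phase" $r$ (i.e., each fixed value of the facility cost $f_r$), and then bound the number of phases. A phase ends when $q_r$ reaches $3k(1+\log n)$, so by construction at most $3k(1+\log n)$ centers are opened during any phase that terminates, and at most that many in the final (possibly incomplete) phase as well. Hence $\E[|C|] \le 3k(1+\log n)\cdot \E[R]$, where $R$ is the total number of phases, and the whole theorem reduces to showing $\E[R] = O(\log(W^*/w^*))$, equivalently that with the facility cost starting at $f_1 = w^*/(k\log n)$ it is very unlikely to need to double it more than $O(\log(W^*/w^*))$ times past the point where $f_r$ exceeds, say, a constant times $W^*/(k\log n)$.

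The key step is a charging argument showing that once $f_r$ is "large enough" relative to the optimum, the expected number of centers opened in that phase is strictly less than the threshold $3k(1+\log n)$, so a phase is unlikely to complete. First I would analyze, for a fixed phase with facility cost $f$, the expected number of new centers opened, broken down by the rings $S^*_{i,\tau}$ defined above. For a point $v$ arriving in ring $S^*_{i,\tau}$, it is opened as a center with probability $\min(D^2(v,C)/f, 1)$. The crucial observation is that once $C$ already contains a center from that ring's cluster that is "close" — in particular once some point of ring $S^*_{i,\tau}$ itself or of a lower ring has been opened — the distance $D(v,C)$ is comparable to $\|v - c_i^*\| + (\text{diameter of the relevant ring})$, which is $O(\sqrt{2^\tau A_i^*})$. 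Summing $D^2(v,C)/f \le O(2^\tau A_i^*)/f$ over all $v$ in the ring, and noting $|S^*_{i,\tau}|\cdot 2^\tau A_i^* = O(2^\tau W^*_i)$ only for the populated rings, I would bound the expected number of "extra" openings in cluster $i$ after the first one in each ring by roughly $W^*_i / f$ up to the number of rings. Adding the at most one "first opening" per ring per cluster gives $O(k\log n)$ deterministic openings plus $O(W^*/f)$ expected openings per phase.

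Thus once $f \ge W^*/(k\log n)$ — which happens after at most $\log_2(W^*/w^*) + O(1)$ doublings from the starting value $w^*/(k\log n)$ — the expected number of openings in a phase is at most $O(k\log n)$, and with a large enough constant in the threshold $3k(1+\log n)$ and a Markov/Chernoff-type argument one shows the probability that the phase actually completes (reaches the threshold) is at most, say, $1/2$. Then $R$ is stochastically dominated by $\log_2(W^*/w^*) + O(1)$ plus a geometric random variable with constant success probability, so $\E[R] = O(\log(W^*/w^*))$ and the theorem follows. I expect the main obstacle to be the geometric/ring bookkeeping in the charging argument: controlling $D(v,C)$ correctly requires arguing that the algorithm opens a center "near" each optimal cluster reasonably early within a phase, and that the tail-summation over rings $\tau$ telescopes to $O(W^*_i)$ rather than $O(W^*_i \log n)$ — getting the constants and the conditioning right so that a single clean Markov bound on the per-phase count suffices is the delicate part.
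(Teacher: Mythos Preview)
Your approach is correct but takes a different route from the paper. The paper does \emph{not} bound $\E[R]$ at all for this theorem. Instead it fixes the first phase $r'$ with $f_{r'}\ge W^*/(k\log n)$, bounds the clusters opened before $r'$ trivially by $3k(1+\log n)\cdot\log(f_{r'}/f_1)=O(k\log n\log(W^*/w^*))$, and then bounds the \emph{total} expected number of clusters opened in all phases $r\ge r'$ at once. The ring argument you sketch is exactly what is used, but the summation is done across all remaining phases simultaneously: for each ring $S^*_{i,\tau}$ one pays a single ``$+1$'' for the first center ever opened there (not one per phase), and the remaining openings sum to $\sum_{r\ge r'} 4\cdot 2^\tau A_i^* |S^*_{i,\tau,r}|/f_r \le 4\cdot 2^\tau A_i^* |S^*_{i,\tau}|/f_{r'}$, which after summing over $i,\tau$ telescopes to $12W^*/f_{r'}=O(k\log n)$. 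No Markov bound and no control of $R$ are needed.

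Your route---bound the expected openings per phase by $k(1+\log n)+O(W^*/f_r)$, then use Markov to show each phase with $f_r\gtrsim W^*/(k\log n)$ completes with probability bounded away from $1$, hence $\E[R]=O(\log(W^*/w^*))$---also works, and is in fact precisely the argument the paper deploys later (in the proof of the cost bound) to control $\E[f_R]$. The trade-off: the paper's direct summation is shorter and sidesteps the conditioning issue you flagged, while your argument has the advantage that it simultaneously yields a bound on $\E[R]$ (and hence on $\E[f_R]$), which you would need anyway for the next theorem. Your worry about the ring sum losing a $\log n$ factor is unfounded: since $2^{\tau-1}A_i^*\le \|v-c_i^*\|^2$ for $v\in S^*_{i,\tau}$ with $\tau\ge 1$, one gets $\sum_\tau 2^\tau A_i^* |S^*_{i,\tau}|\le A_i^*|S_i^*| + 2\sum_{\tau\ge 1}\sum_{v\in S^*_{i,\tau}}\|v-c_i^*\|^2 \le 3W_i^*$, so the telescoping is clean.
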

\begin{proof}
Consider the phase $r'$ of the algorithm where, for the first time
$$f_{r'} \ge \frac{W^*}{k\log_{} n} \ .$$
The initial facility cost $f_1$ is doubled at every phase during each of which the algorithm creates $3k (1+\log_{} n)$ clusters.
The total number of clusters opened before phase $r'$ is trivially upper bounded by $3k (1+\log_{} n)\log_{}\frac{ f_{r'}}{f_1}$. 
Which is, in turn, $O(k \log_{} n \log_{}\frac{W^*}{w^*})$ by the choice of $f_1$.

Bounding the number of centers opened during and after phase $r'$ is more complicated.
%Consider an arbitrary phase $r'\ge 1$. 
%In what follows, we show the expected number of clusters opened during phases $r',r'+1,\dots,R$  by Algorithm~\ref{algSemiOnline} is at most
%$$k(1+\log n ) +12W^*/f_{r'} . $$ 
%Substituting $f_{r'} \ge W^*/k \log n$ will compete the proof.
%
%Start by noting that for any two vectors $v,v'\in S^{*}_{i,\tau}$  we have that
%\begin{eqnarray}\label{triangleInequality}
%||v-v'||^2_2\le 2 ||v-z_i||^2_2+2||v'-z_i||^2_2\le 4\cdot 2^{\tau}A^*_i.
%\end{eqnarray}
Denote by $S^{*}_{i,\tau,r}$ the set of points in the ring $S^{*}_{i,\tau}$ that our algorithm encounters during phase $r$.
The expected number of clusters initiated by vectors in the ring $S^{*}_{i,\tau}$ during phases $r',\dots,R$ is at most
$$1+\sum_{r\ge r'}\frac{4\cdot 2^{\tau}A^*_i}{f_r}|S^{*}_{i,\tau,r}| \ . $$
This is because once we open the first cluster with a center at some $v\in S^{*}_{i,\tau}$ the probability of 
opening a cluster for each subsequent vector  $v'\in S^{*}_{i,\tau}$ is upper bounded by 
$$\frac{||v-v'||^2_2}{f_r}\le \frac{2 ||v-c^*_i||^2_2+2||v'-c^*_i||^2_2}{f_r}\ \le \frac{4\cdot 2^{\tau}A^*_i}{f_r}$$ 
by the (squared) triangle inequality for $v,v' \in S^*_{i,\tau}$.
%%by inequality (\ref{triangleInequality}).
%%\begin{eqnarray*}%\label{triangleInequality}
%%||v-v'||^2_2\le 2 ||v-c^*_i||^2_2+2||v'-c^*_i||^2_2\le 4\cdot 2^{\tau}A^*_i \ .
%%\end{eqnarray*}

Therefore the expected number of clusters chosen from $S^*_i$ during and after phase $r'$ is at most
\begin{eqnarray*}
&& \sum_{\tau\ge0}\left(1+\sum_{r\ge r'}\frac{4\cdot 2^{\tau} A^*_i}{f_r}|S^{*}_{i,\tau,r}|\right) \\
&& \le 1+ \log n + \sum_{\tau \ge 0} 4\cdot2^{\tau}A_i^* \sum_{r\ge r'} \frac{|S^{*}_{i,\tau,r}|}{f_r}\\
&& \le 1+ \log n + \frac{4}{f_{r'}}\sum_{\tau \ge 0} 2^{\tau} A_i^*|S^{*}_{i,\tau}| \\
&& \le 1+ \log n + \frac{4}{f_{r'}}A^*_i|S^{*}_{i,0}|+\frac{8}{f_{r'}}\sum_{\tau \ge 1} 2^{\tau-1}A^*_i|S^{*}_{i,\tau}|\\
&& \le 1+ \log n + \frac{4}{f_{r'}}A^*_i|S^{*}_{i}|+\frac{8}{f_{r'}}\sum_{\tau \ge 1} \sum_{v \in S^{*}_{i,\tau}} \|v - c^*_i\|^2\\
&& \le 1+ \log n + \frac{4}{f_{r'}}W_i^*+\frac{8}{f_{r'}}W_i^*\\
&& \le 1+ \log n + \frac{12W^*_i}{f_{r'}}.
\end{eqnarray*}
Summing up over all $i=1,\dots, k$ using $\sum_i W_i^* = W^*$ we obtain that the expected number of centers chosen during phases $r',\ldots,R$
is at most 
\begin{equation}\label{centersAfterPhase}
k(1+\log n) + 12W^*/f_{r'} \ .
\end{equation}
Substituting $f_{r'} \ge W^*/k \log n$ completes the proof of the theorem.
\end{proof}

Before we estimate the expected cost of clusters opened by our online algorithm we prove the following technical lemma.
\begin{lemma}\label{random}
We are given a sequence $X_1,\dots, X_n$ of $n$ independent experiments. 
Each experiment succeeds with probability $p_i\ge \min\{A_i/B,1\}$ where $B\ge 0$ and $A_i\ge 0$ for $i=1,\dots,n$.
Let $t$ be the (random) number of sequential unsuccessful experiments, then:
$$\E\left[\sum_{i=1}^{t}A_i\right]\le B.$$
\end{lemma}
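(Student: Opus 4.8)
The plan is to evaluate the expectation directly via a tail-sum (layer-cake) expansion and then collapse a telescoping product. First, write
$$\sum_{i=1}^{t} A_i \;=\; \sum_{i=1}^{n} A_i \cdot \mathbf{1}[\,t \ge i\,],$$
and observe that the event $\{t \ge i\}$ is exactly the event that the first $i$ experiments $X_1,\dots,X_i$ all fail. By independence this event has probability $\prod_{j=1}^{i}(1-p_j)$, so linearity of expectation yields
$$\E\Big[\sum_{i=1}^{t} A_i\Big] \;=\; \sum_{i=1}^{n} A_i \prod_{j=1}^{i}(1-p_j).$$

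Second, I would use the hypothesis $p_j \ge \min\{A_j/B,1\}$ to replace each factor $1-p_j$ by the larger $1-\min\{A_j/B,1\}$. Set $b_j = \min\{A_j/B,1\}\in[0,1]$. If some $A_i \ge B$ then $b_i = 1$, and every product $\prod_{j=1}^{m}(1-b_j)$ with $m \ge i$ vanishes, so such an $A_i$ contributes nothing; thus for every surviving term we may use $A_j = B\,b_j$. It therefore suffices to prove $\sum_{i=1}^{n} b_i \prod_{j=1}^{i}(1-b_j) \le 1$ and multiply through by $B$.

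Third, bound $\prod_{j=1}^{i}(1-b_j) \le \prod_{j=1}^{i-1}(1-b_j)$ and invoke the identity $b_i\prod_{j=1}^{i-1}(1-b_j) = \prod_{j=1}^{i-1}(1-b_j) - \prod_{j=1}^{i}(1-b_j)$. Summing over $i=1,\dots,n$ telescopes to $1 - \prod_{j=1}^{n}(1-b_j) \le 1$, which completes the argument.

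I do not anticipate a genuine obstacle: the whole proof is a one-line telescoping sum once the tail-sum identity is set up. The only points that require a little care are (i) the index range of the product — the event $\{t \ge i\}$ concerns failures of $X_1,\dots,X_i$, so the product runs up to $i$, and one only throws away the harmless nonnegative slack when passing from $\prod_{j\le i}(1-b_j)$ to $\prod_{j<i}(1-b_j)$ — and (ii) disposing of the $\min\{\cdot,1\}$ truncation, equivalently the case $A_i \ge B$ in which an experiment succeeds deterministically and annihilates all subsequent contributions.
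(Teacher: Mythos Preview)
Your proof is correct and follows essentially the same route as the paper: expand the expectation via the tail-sum $\sum_i A_i\Pr[t\ge i]=\sum_i A_i\prod_{j\le i}(1-p_j)$, drop the last factor $(1-b_i)$, and telescope $\sum_i b_i\prod_{j<i}(1-b_j)=1-\prod_j(1-b_j)\le 1$. The only cosmetic difference is in handling the truncation $\min\{A_j/B,1\}$: the paper cuts the sum at the first index $n'$ with $p_{n'+1}=1$, whereas you carry $b_j=\min\{A_j/B,1\}$ throughout and observe that any term with $b_i=1$ vanishes because its own factor $(1-b_i)$ kills the product; both devices accomplish the same thing.
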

\begin{proof}
Let $n'$ be the maximal index for which $p_i < 1$ for all $i\le n'$.
% If there is no such index then $T=n+1$.
%Consider the boolean variable $X_i$ for $i=1,\dots, T-1$. The probability that it contributes to the sum 
%$\E\left[\sum_{i=1}^{t-1}A_i\right]$ is upper bounded by  $\prod_{j=1}^i\left(1-\frac{A_j}{B}\right)$. Therefore, by linearity of expectation we have
\begin{eqnarray*}
\E\left[\sum_{i=1}^{t}A_i\right] &=& \sum_{i=1}^{n'}A_i \Pr[t \ge i] \\
&\le& \sum_{i=1}^{n'}A_i\prod_{j=1}^i\left(1-\frac{A_j}{B}\right)\\
%&=&B\sum_{i=1}^{n'}\frac{A_i}{B}\prod_{j=1}^i\left(1-\frac{A_j}{B}\right)\\
&\le& B \sum_{i=1}^{n'}\frac{A_i}{B}\prod_{j=1}^{i-1}\left(1-\frac{A_j}{B}\right)\\
&\le& B.
\end{eqnarray*}
The last inequality uses $A_i/B \le p_i < 1$ for $i \le n'$.
\end{proof}

\begin{theorem}\label{thmApproxSemiOnline}
Let $W$ be the cost of the online assignments of Algorithm \ref{algSemiOnline} and $W^*$ the optimal $k$-means clustering cost. Then 
$$\E[W]=O(W^*) \ .$$
\end{theorem}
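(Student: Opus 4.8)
The plan is to bound the cost incurred by the points in each optimal ring $S^*_{i,\tau}$ separately, then sum over $\tau$ and $i$. For a fixed ring $S^*_{i,\tau}$, I would order the points as the algorithm encounters them and split the contribution into two parts: the cost paid \emph{before} the first center from that ring (in a given phase) is opened, and the cost paid \emph{after}. The ``after'' part is easy: once some $v \in S^*_{i,\tau}$ is a center, every later $v' \in S^*_{i,\tau}$ is assigned at cost at most $\|v'-v\|^2 \le 2\|v'-c^*_i\|^2 + 2\|v-c^*_i\|^2 \le 4\cdot 2^\tau A^*_i$ by the squared triangle inequality; summing over the ring gives $O(2^\tau A^*_i |S^*_{i,\tau}|)$, which telescopes against the optimum exactly as in the proof of Theorem~\ref{thmNumberOfClusterSemiOnline} to yield $O(W^*_i)$ after summing over $\tau$. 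Additionally, each time the phase increments (at most $O(\log n \log(W^*/w^*))$ times overall, times $k$ rings) we may ``reset'' and pay once more for a prefix; but this is where Lemma~\ref{random} enters.

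For the ``before'' part, fix a phase $r$ and a ring $S^*_{i,\tau}$. Let the points of $S^*_{i,\tau}$ seen during phase $r$, before any center from this ring is opened, be $v_1, v_2, \dots$ in arrival order. Each $v_j$ is assigned to the current center set $C$ at cost $D^2(v_j, C)$, and it becomes a new center with probability $\min(D^2(v_j,C)/f_r, 1)$. So this is precisely the setting of Lemma~\ref{random} with $A_j = D^2(v_j, C)$ and $B = f_r$: the stopping time $t$ is the first success, and $\E[\sum_{j \le t} D^2(v_j,C)] \le f_r$. Hence the expected assignment cost accrued in ring $S^*_{i,\tau}$ during phase $r$ before its first center is at most $f_r$. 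Crucially, $f_r$ was chosen so that for the relevant phases $f_r = O(W^*/(k\log n))$ — more precisely, by the argument bounding $r'$ in Theorem~\ref{thmNumberOfClusterSemiOnline}, in the final phase $R$ we have $f_R = O(W^*/(k\log n))$, and the geometric growth of $f_r$ means $\sum_{r \le R} f_r = O(f_R) = O(W^*/(k\log n))$.

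Now I would assemble the pieces. Summing the ``before'' bound over all phases $r$, all rings $\tau$ (there are $O(\log n)$ of them), and all clusters $i$ (there are $k$), the total expected ``before'' cost is at most
\[
\sum_{i=1}^k \sum_{\tau} \sum_{r} f_r \;=\; O\!\left(k \cdot \log n \cdot \frac{W^*}{k \log n}\right) \;=\; O(W^*).
\]
Adding the ``after'' cost, which is $O(\sum_i W^*_i) = O(W^*)$ as computed above, gives $\E[W] = O(W^*)$. One subtlety to handle carefully: when a point $v_j$ arrives, the set $C$ already contains centers that may have been opened by \emph{other} rings or clusters, so $D^2(v_j, C)$ is only \emph{smaller} than the distance to centers within its own ring — this only helps, and it is exactly why applying Lemma~\ref{random} with $A_j = D^2(v_j,C)$ (the true assignment cost) is legitimate. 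The main obstacle is the bookkeeping of the phase structure: one must argue that the phase counter does not advance so often as to make the $\sum_r f_r$ sum blow up, i.e. that the geometric series of facility costs is dominated by its last term $f_R = O(W^*/(k\log n))$, and that the ``reset'' of the before-cost at each phase boundary contributes only an additional $f_r$ per ring per phase, which is precisely the quantity already summed. Everything else is the squared triangle inequality and the telescoping already carried out in Theorem~\ref{thmNumberOfClusterSemiOnline}.
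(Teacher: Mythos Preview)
Your decomposition into a ``before'' and ``after'' contribution per ring, with Lemma~\ref{random} handling the before part and the squared triangle inequality handling the after part, is exactly the paper's approach. The after-cost bound of $O(W^*)$ is correct and matches the paper verbatim. The minor variant of applying Lemma~\ref{random} once per phase with $B=f_r$ (rather than once overall with $B=f_R$, as the paper does using $f_r\le f_R$) is fine; both routes reduce the problem to controlling $f_R$.

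The genuine gap is your handling of $f_R$. You assert that ``by the argument bounding $r'$ in Theorem~\ref{thmNumberOfClusterSemiOnline}, in the final phase $R$ we have $f_R=O(W^*/(k\log n))$,'' and then treat the rest as bookkeeping. This is not what Theorem~\ref{thmNumberOfClusterSemiOnline} gives you. That proof shows the \emph{expected total number} of centers opened from phase $r'$ onward is $O(k\log n)$; it says nothing about where the algorithm actually terminates. The final phase $R$ is a random variable, and nothing so far rules out $R$ exceeding $r'$ by many doublings with nonnegligible probability, which would make $\sum_r f_r$ (and hence your before-cost bound) blow up. The paper closes this gap with a separate argument: it defines $r''$ as the first phase with $f_{r''}\ge 36W^*/(k(1+\log n))$, uses Equation~(\ref{centersAfterPhase}) to bound the expected number of centers opened in any single phase $r\ge r''$ by $\tfrac{4}{3}k(1+\log n)$, and then applies Markov's inequality to conclude that each such phase is the last one with probability at least $5/9$. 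A geometric-series computation then yields $\E[f_R]=O(f_{r''})=O(W^*/(k\log n))$. This Markov step is the missing idea in your proposal, not mere bookkeeping; without it the claim $\sum_r f_r = O(W^*/(k\log n))$ is unsupported.
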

\begin{proof}
Consider the service cost of vectors in each ring $S^{*}_{i,\tau}$ in two separate stages. Before a vector from the ring is chosen to start a new cluster and after.
Before a center from  $S^{*}_{i,\tau}$ is chosen each vector $v \in S^{*}_{i,\tau}$ is chosen with probability $p \ge \min\{d^2(v,C)/f_R,1\}$. 
Here, $C$ is the set of centers already chosen by the algorithm before encountering $v$.
If $v$ is not chosen the algorithm incurs a cost of $d^2(v,C)$. 
By Lemma \ref{random} the expected sum of these costs is bounded by $f_R$.
Summing over all the rings we get a contribution of $O(f_R k\log n )$.

After a vector $v \in S^{*}_{i,\tau}$ is chosen to start a new cluster, the service cost of each additional vector $v'$ is at most $\|v - v'\|^2 \le 4\cdot 2^{\tau}A_i^*$.
Summing up over all vectors and rings, this stage contributes are most $4\sum_{i}\sum_{\tau} \cdot 2^{\tau}A_i^*\cdot |S^{*}_{i,\tau}| \le 12 W^*$ to the cost of our solution.
All in all, the expected online $k$-means cost  is bounded by $$\E[W] = O(f_R k\log n  + W^*) \ . $$

We now turn to estimating $\E[f_R]$. Consider the first  phase $r''$ of the algorithm such that 
\[
f_{r''}\ge \frac{36W^*}{k(1+\log_{} n)} \ .
\]
By Equation~\ref{centersAfterPhase} the expected number of clusters opened during and after phase $r''$ is at most $k(1+\log_{}n) +12W^*/f_{r''} \le \frac43 k(1+\log_{}n)$. 
By Markov's inequality the probability of opening more than $3 k(1+\log_{}n) $ clusters is at most $4/9$. 
Therefore, with probability at least $5/9$ the algorithm will conclude while at phase $r''$.

%By Lemma \ref{ring} the expected service costs for vectors in $S^{*}_{i,\tau}$ is upper bounded by $f_R+ 4\cdot 2^{\tau}A_i^*\cdot |S^{*}_{i,\tau}|$. Summing up over all the rings we derive and upper bound of
%$f_Rk(1+\log n)+ 12W^*$.
%
Let $p$ be the probability that our algorithm terminates before round $r''$. 
Since the probability of concluding the execution at each of the rounds after $r''$ is at least $5/9$ we derive an upper bound
\begin{eqnarray*}
\E[f_R]&\le& p f_{r''-1} + (1-p)\sum_{r=r''}^{+\infty}f_r\cdot \frac59 \cdot \left(\frac49\right)^{r-r''}\\
&<&f_{r''} + f_{r''}\cdot \frac59 \sum_{i=0}^{+\infty}2^i\cdot \left(\frac49\right)^i = O(f_{r''})
%&\le& 5f_{r''}  \le 5 \cdot 2\cdot \frac{72W^*}{k(1+\log_{} n)}.
\end{eqnarray*}
Combining $\E[f_R] = O(f_{r''})$ with  our choice of $f_{r''} = O(\frac{W^*}{k(1+\log_{} n)})$ and our previous observation that $\E[W] = O(f_R k\log n  + W^*)$ completes the proof.
% Therefore, the total expected cost of the approximate solution is upper bounded by $ 732W^*=O(1)W^*$.
\end{proof}

\section{Fully Online $k$-means Algorithm}\label{secFullyOnline}

Algorithm \ref{algFullyOnline} is fully online yet it defers from Algorithm \ref{algSemiOnline} in only a few aspects.
First, since $n$ is unknown, the initial facility cost and the doubling condition cannot depend on it.
Second, it must generate its own lower bound $w^*$ based on a short prefix of points in the stream.
Note that $w^*$ is trivially smaller that $W^*$.
Any clustering of $k+1$ points must put at least two points in one cluster, incurring a cost of $\|v -v'\|^2/2 \ge \min_{v,v'}\|v -v'\|^2/2$.

\begin{algorithm}
\begin{algorithmic}
\STATE {\bf input:} $V$, $k$
\STATE $C \gets$ the first $k+1$ distinct  vectors in $V$; and $n=k+1$
\STATE (For each of these {\bf yield} itself as its center)
\STATE $w^{*} \gets \min_{v,v' \in C} \|v -v'\|^2/2$
\STATE $r \gets 1$; $q_1 \gets 0$; $f_1 = w^*/k$
\FOR {$v \in$ the remainder of  $V$}
	\STATE $n \gets n+1$
	\STATE {\bf with probability} $p = \min(D^2(v, C)/f_r,1)$
	\STATE \tab $C \gets C \cup \{v\}; q_r\gets q_r+1$
	\IF {$q_r \ge 3 k (1+ \log_{} (n))$}
		\STATE $r \gets r+1$; $q_r \gets 0$; $f_r \gets 2\cdot f_{r-1}$
	\ENDIF
	\STATE {\bf yield:} $c = \arg\min_{c \in C}\|v - c\|^2$
\ENDFOR
\caption{Online $k$-means algorithm}\label{algFullyOnline}
\end{algorithmic}
\end{algorithm}

\begin{theorem}\label{thmNumberOfClusterFullyOnline}
Let $C$ be the set of clusters defined by Algorithm~\ref{algFullyOnline}.  
Then 
$$\E[ |C| ]=O\left(k \log_{} n \log_{} \frac{W^*}{w^*}\right) = O\left(k \log_{} n \log_{} \gamma n \right) \ .$$
\end{theorem}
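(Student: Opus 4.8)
The plan is to establish the two equalities in turn. For the first, $\E[|C|]=O(k\log n\,\log(W^*/w^*))$, I would reuse the proof of Theorem~\ref{thmNumberOfClusterSemiOnline} almost verbatim: the only structural differences between Algorithm~\ref{algSemiOnline} and Algorithm~\ref{algFullyOnline} are that the initial facility cost is now $f_1=w^*/k$ instead of $w^*/(k\log n)$, and that the stream length in the doubling threshold $3k(1+\log n)$ is the running value of $n$ rather than a known constant. Throughout I would read $n$ as the \emph{final} stream length; this is legitimate since $\log n$ only increases as points arrive, so each phase still opens at most $3k(1+\log n)$ centers and the ring decomposition (which uses only $\tau\le\log|S_i^*|\le\log n$) is unaffected. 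Recall also that $w^*\le W^*$, which was argued just before Algorithm~\ref{algFullyOnline}, so in particular $f_1=w^*/k\le W^*/k$.

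Reintroducing the optimal clusters $S_i^*$, costs $W_i^*$, averages $A_i^*$ and rings $S^*_{i,\tau}$ as in Section~\ref{secSemiOnline}, let $r'$ be the first phase with $f_{r'}\ge W^*/(k\log n)$, which exists because $f_1\le W^*/k$. Since $f_1=w^*/k$ and the facility cost doubles each phase, there are $O(\log(W^*/w^*))$ phases before $r'$, so the number of centers opened before phase $r'$ is at most $3k(1+\log n)$ times this, i.e.\ $O(k\log n\,\log(W^*/w^*))$. For the centers opened during and after phase $r'$ I would run the ring computation of Theorem~\ref{thmNumberOfClusterSemiOnline} without change — it relies only on the squared triangle inequality bound $\|v-v'\|^2\le 4\cdot 2^{\tau}A_i^*$ on a ring and on $f_r\ge f_{r'}$ for $r\ge r'$ — obtaining again Equation~\ref{centersAfterPhase}, that is, an expected $k(1+\log n)+12W^*/f_{r'}=O(k\log n)$ additional centers. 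Summing the two contributions gives the first equality.

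For the second equality it suffices to show $W^*/w^*=O(\gamma n)$, where $\gamma$ denotes the ratio of the largest to the smallest squared distance between distinct input points. Clustering all of $V$ into a single cluster centered at the first point $v_1$ costs $\sum_{v\in V}\|v-v_1\|^2\le n\max_{u,v}\|u-v\|^2$, hence $W^*\le n\max_{u,v}\|u-v\|^2$. Conversely, the first $k+1$ distinct vectors all lie in $V$, so $w^*=\frac{1}{2}\min_{v,v'\in C}\|v-v'\|^2\ge\frac{1}{2}\min_{u\neq v}\|u-v\|^2$. Dividing the two bounds yields $W^*/w^*\le 2n\gamma$, so $\log(W^*/w^*)=O(\log(\gamma n))$, which is the claimed bound.

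The only real subtlety is the time-varying $n$ in the doubling threshold, and I expect the bulk of the careful writing to be in checking that replacing the fixed $n$ of Theorem~\ref{thmNumberOfClusterSemiOnline} by its final value leaves every inequality in that proof intact — which it does, because all occurrences of $n$ in the counting are upper bounds and $\log n$ is monotone in the number of points seen.
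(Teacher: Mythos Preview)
Your proposal is correct and follows the same approach as the paper: the paper's proof is essentially a one-line pointer back to Theorem~\ref{thmNumberOfClusterSemiOnline} (noting that Algorithm~\ref{algFullyOnline} only starts with a larger $f_1$ and doubles at least as often, so the same bound carries over), together with the observation $W^*\le n\max_{v,v'}\|v-v'\|^2$ for the second equality. One minor slip: the paper defines $\gamma$ as the ratio of \emph{distances}, not squared distances, so your division actually yields $W^*/w^*\le 2n\gamma^2$; of course $\log(n\gamma^2)=O(\log(n\gamma))$, so the conclusion is unchanged.
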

Here  $\gamma = \frac{\max_{v,v'}\|v-v'\|}{\min_{v,v'}\|v -v'\|}$ is the dataset ``aspect ratio''.
\begin{proof}
%Let $r''$ be the first phase of the algorithm such that 
%$$f_{r''}\ge \frac{72W^*}{k}.$$
Intuitively, for the same lower bound $w^*$ Algorithm~\ref{algFullyOnline} should create fewer centers than Algorithm~\ref{algSemiOnline} since its initial facility cost is higher and it is doubled more frequently.
This intuition is made concrete by retracing the proof of Theorem~\ref{thmNumberOfClusterFullyOnline} to show $$\E[ |C| ] = O\left(k \log_{} n \log_{} \frac{W^*}{w^*}\right) \ .$$

To get a handle on the value of $W^*/w^*$, observe that $W^* \le n \max_{v,v'}\|v-v'\|^2$. 
Combining this with the definition of $\gamma$ we get $\log (W^*/w^*) = O(\log \gamma n)$.
\end{proof}

\begin{theorem}
Let $W$ be the cost of the online assignments of Algorithm \ref{algFullyOnline} and $W^*$ the optimal $k$-means clustering cost. Then 
$$\E[W]=O(W^* \log n ) \ .$$
\end{theorem}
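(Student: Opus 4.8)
The plan is to follow the proof of Theorem~\ref{thmApproxSemiOnline} essentially verbatim and to pinpoint where the extra logarithmic factor is forced. First I would reuse the two‑stage accounting per ring. For the cost charged to a ring $S^*_{i,\tau}$ \emph{after} some vector of that ring has been opened as a center, every later $v'\in S^*_{i,\tau}$ is served at cost $\|v-v'\|^2\le 4\cdot 2^{\tau}A^*_i$ by the squared triangle inequality, so this stage contributes $\sum_{i,\tau}4\cdot 2^{\tau}A^*_i|S^*_{i,\tau}|\le 12W^*$, exactly as before. For the cost charged \emph{before} a ring is opened, a vector $v\in S^*_{i,\tau}$ that is not opened pays $D^2(v,C)$ and is opened with probability at least $\min\{D^2(v,C)/f_r,1\}\ge\min\{D^2(v,C)/f_R,1\}$, the inequality holding because the facility cost is non‑decreasing; Lemma~\ref{random} then bounds the expected cost of this stage for one ring by $f_R$, and summing over the $k(1+\log n)$ rings gives $O(f_R\,k\log n)$. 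Hence $\E[W]=O\big(\E[f_R]\,k\log n + W^*\big)$, and the whole problem reduces to bounding $\E[f_R]$.

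In the semi‑online proof $\E[f_R]$ was controlled by taking the critical phase $r''$ to be the first one with $f_{r''}\ge 36W^*/(k(1+\log n))$, bounding the expected number of centers opened during and after $r''$ by $\frac43 k(1+\log n)$ via Equation~\ref{centersAfterPhase}, and comparing this with the \emph{fixed} doubling threshold $3k(1+\log n)$ through Markov's inequality; a geometric series over the subsequent phases then gave $\E[f_R]=O(f_{r''})=O(W^*/(k\log n))$, so that both terms above were $O(W^*)$. This comparison does not transfer to Algorithm~\ref{algFullyOnline}, because there the threshold is $3k(1+\log n)$ evaluated at the \emph{running} value of $n$: opening few centers during the critical phase no longer guarantees that the algorithm stays in it, since the threshold in force may be much smaller than $3k(1+\log n)$. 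To repair this I would move the critical phase up by a single $\log n$ factor, to the first $r''$ with $f_{r''}=\Theta(W^*/k)$. At that level Equation~\ref{centersAfterPhase} still bounds the expected number of centers opened during and after $r''$ by $k(1+\log n)+O(k)=O(k\log n)$; I would then argue that by the time the facility cost has doubled all the way up to $\Theta(W^*/k)$, enough points (hence centers) have been processed that the running value of $n$, and with it the doubling threshold, is within a constant factor of $3k(1+\log n)$, after which the Markov‑plus‑geometric‑series argument of Theorem~\ref{thmApproxSemiOnline} goes through unchanged and yields $\E[f_R]=O(f_{r''})=O(W^*/k)$.

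Substituting $\E[f_R]=O(W^*/k)$ into $\E[W]=O(\E[f_R]\,k\log n + W^*)$ gives $\E[W]=O(W^*\log n)$, as claimed. The main obstacle is precisely the middle step just described: quantifying how many points must have arrived before the facility cost reaches $\Theta(W^*/k)$, and verifying that this is enough for the running doubling threshold to play the role that the a‑priori‑known threshold played in Section~\ref{secSemiOnline} — and, conversely, that raising the critical phase by more than one $\log n$ factor is not required, so the degradation is a clean $\log n$. Everything else — the ring partition, Lemma~\ref{random}, Equation~\ref{centersAfterPhase}, and the geometric tail on the number of remaining phases — is inherited from the semi‑online analysis.
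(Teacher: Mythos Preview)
Your reduction to $\E[W]=O(\E[f_R]\,k\log n + W^*)$ and your choice of the critical phase $r''$ as the first with $f_{r''}\ge 36W^*/k$ both match the paper exactly. The gap is in the step you yourself flag as the main obstacle: the claim that ``by the time the facility cost has doubled all the way up to $\Theta(W^*/k)$, enough points \ldots\ have been processed that the running value of $n$, and with it the doubling threshold, is within a constant factor of $3k(1+\log n)$.'' This is not true in general. The stream can begin with a handful of very distant points that force the facility cost up to $\Theta(W^*/k)$ almost immediately, followed by arbitrarily many points; then $n_{r''}$ is tiny and $\log n_{r''}$ bears no relation to $\log n$. So you cannot import the semi-online Markov argument by comparing the Equation~\ref{centersAfterPhase} bound $k(1+\log n)+O(k)$ against the running threshold $3k(1+\log n_{r''})$.

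The paper sidesteps this entirely by a different decomposition of $q_r$, the number of centers opened in a single phase $r\ge r''$. Write $q_r\le k(1+\log n_r)+q'_r$, where the first term counts centers that are the \emph{first} opened in their ring (at most one per ring, and there are $k(1+\log n_r)$ rings by step $n_r$) and $q'_r$ counts the rest. The point is that the first term scales with $\log n_r$ exactly as the threshold does, so advancing past phase $r$ forces $q'_r\ge 2k(1+\log n_r)\ge 2k$. Meanwhile the argument leading to Equation~\ref{centersAfterPhase} gives $\E[q'_r]\le 12W^*/f_r\le k/3$, a bound independent of $n_r$. Markov then yields $\Pr[\text{advance past }r]\le 1/6$ for every $r\ge r''$, and the geometric tail finishes as before: $\E[f_R]=O(f_{r''})=O(W^*/k)$. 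The trick, in short, is not to lower-bound the running threshold but to peel off the part of $q_r$ that grows with $\log n_r$ and observe that what remains has small expectation uniformly in $n_r$.
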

\begin{proof}  
We start by following the argument of the proof of Theorem~\ref{thmApproxSemiOnline} verbatim.
We arrive at the conclusion that $$\E[W] = O(f_R k\log n  + W^*) $$
where $f_R$ is the final facility cost of the algorithm and $R$ is its last phase.
Showing that $\E[f_R] = O(W^*/k)$ will therefore complete the proof.

%We follow the argument in the proof of Theorem \ref{thmNumberOfClusterSemiOnline} that lead us to Equation~(\ref{centersAfterPhase}).
Consider any phase $r \ge r''$ of the algorithm where $r''$ is the smallest index such that 
$$f_{r''} \ge \frac{36W^*}{k}.$$
Let $n_{r}$ be the number of points from the input the algorithm went through by the end of phase $r$.
Let $q_r$ be the number of clusters opened during phase $r$ and $q'_r$ the number those who are {\it not} the first in their ring.
$$q_r \le k\log(1+\log n_r) + q'_r$$ 
The term $k\log(1+\log n_r)$ is an upper bound on the number of rings at the end of stage $r$.
We pessimistically count at most one (the first) cluster from each such ring. 
Following the argument in the proof of Theorem \ref{thmNumberOfClusterSemiOnline} that lead us to Equation~(\ref{centersAfterPhase}) we conclude $\E[q'_r] \le 12W^*/f_r$.

Algorithm~\ref{algFullyOnline} only advances to the next phase if $q_r \ge 3 \log(1+\log n_r)$ which requires $q'_r \ge 2k(1+\log n_r)$. 
By Markov's inequality and the fact that $\E[q'_r] \le 12W^*/f_r \le k/3$ the probability of reaching the next phase is at most $1/6$.

We now estimate $\E[f_R]$. Let $p$ be the probability that our algorithm finishes before round $r''$. We have
\begin{eqnarray*}
\E[f_R]&\le& p f_{r''-1} + (1-p)\sum_{r=r''}^{+\infty}f_r\cdot \frac56 \cdot \left(\frac16\right)^{r-r''}\\
&\le&f_{r''} + f_{r''}\cdot \frac56\sum_{i=0}^{+\infty}2^i\cdot \left(\frac16\right)^i = O(f_{r''})\\
%&\le& 5f_{r''}/4  \le \frac52 \cdot \frac{72W^*}{k}.
\end{eqnarray*}
Since $f_{r''} = O(W^*/k)$ the proof is complete.
\end{proof}

\section{Experimental Analysis of the Algorithm}
\subsection{Practical modifications to the algorithm}
While experimenting with the algorithm, we discovered that some $\log$ factors were, in fact, too pessimistic in practice.
We also had to make some pragmatic decisions about, for example, how to set the initial facility cost.
As another practical adjustment we introduce the notion of $k_{target}$ and $k_{actual}$.
The value of $k_{target}$ is the number of clusters we would like the algorithm to output while $k_{actual}$ is the actual number of clusters generated.
Internally, the algorithm operates with a value of $k = \lceil(k_{target}-15)/5\rceil$. 
This is a heuristic (entirely ad-hoc) conversion that compensates for the $k_{actual}$ being larger than $k$ by design.

\begin{algorithm}
\begin{algorithmic}
\STATE {\bf input:} $V$, $k_{target}$
\STATE $k = \lceil(k_{target}-15)/5\rceil$
\STATE $C \gets$ the first $k+10$ vectors in $V$
\STATE (For each of these {\bf yield} itself as its center)
%\STATE $D_{\min} = \{D^2(v_i,C \setminus \{i\}) | v_i \in C\}$
\STATE $w^*\gets$ half the sum of the $10$ smallest squared distances of points in $C$ to their closest neighbor.
\STATE $r \gets 1$; $q_1 \gets 0$; $f_1 \gets w^*$
\FOR {$v \in$ the remainder of  $V$}
	\STATE {\bf with probability} $p = \min(D^2(v, C)/f_r,1)$
	\STATE \tab $C \gets C \cup \{v\}; q_r\gets q_r+1$
	\IF {$q_r \ge k $}
		\STATE $r \gets r+1$; $q_r \gets 0$; $f_r \gets 10\cdot f_{r-1}$
	\ENDIF
	\STATE {\bf yield:} $c = \arg\min_{c \in C}\|v - c\|^2$
\ENDFOR
\STATE $k_{actual} \gets |C|$
\caption{Online $k$-means algorithm}\label{algExperimental}
\end{algorithmic}
\end{algorithm}

\subsection{Datasets}
To evaluate our algorithm we executed it on $12$ different datasets.
All the datasets that we used are conveniently aggregated on the LibSvm website \cite{libsvmData} and on the UCI dataset collection \cite{UCIdata}.
%While they are originally organized as classification benchmarks, we simply striped the label and used the vectors for clustering.
Some basic information about each dataset is given in Table~\ref{table1}. 
%For each dataset, ``$nnz$" gives to total number of non zero features in the dataset, $n$ the number of examples and $d$ the dimension of the problem.

\newcommand{\ttt}[2]{ \parbox{#1pt}{\vspace{0.1cm} \noindent  #2 \vspace{0.1cm}}}
\begin{table}[h!]
\begin{center}
\begin{tabular}{|c|c|c|c|} \hline
Dataset		&	$nnz$		& $n$	&	$d$	\\ \hline
20news-binary	&	2.44E+6	&	1.88E+4	&	6.12E+4	\\ \hline
adult			&	5.86E+5	&	4.88E+4	&	1.04E+2	\\ \hline
ijcnn1		&	3.22E+5	&	2.50E+4	&	2.10E+1	\\ \hline
letter			&	2.94E+5	&	2.00E+4	&	1.50E+1	\\ \hline
magic04		&	1.71E+5	&	1.90E+4	&	9.00E+0 	\\ \hline
maptaskcoref	&	6.41E+6	&	1.59E+5	&	5.94E+3	\\ \hline
nomao		&	2.84E+6	&	3.45E+4	&	1.73E+2	\\ \hline
poker		&	8.52E+6	&	9.47E+5	&	9.00E+0	\\ \hline
shuttle		&	2.90E+5	&	4.35E+4	&	8.00E+0	\\ \hline
skin			&	4.84E+5	&	2.45E+5	&	2.00E+0	\\ \hline
vehv2binary	&	1.45E+7	&	2.99E+5	&	1.04E+2	\\ \hline
w8all			&	7.54E+5	&	5.92E+4	&	2.99E+2	\\ \hline
\end{tabular}
\end{center}
\caption{The table gives some basic information about the datasets we experimented with. 
The column under $nnz$ gives the number of non zero entries in the entire dataset, $n$ the number of vectors and $d$ their dimension.
Much more information is provided on LibSvm website \cite{libsvmData} and in the UCI dataset collection \cite{UCIdata}.
}
\label{table1}
\end{table}%

Feature engineering for the sake of online learning is one of the motivations for this work.
For that reason, we apply standard stochastic gradient descent linear learning with the squared loss on these data.
Once with the raw features and once with the $k$-means features added. 
In some cases we see a small decrease in accuracy due to slower convergence of the learning on a larger feature set.
This effect should theoretically be nullified in the presence of more data.
In other cases, however, we see a significant uptick in classification accuracy. This is in agreement with prior observations \cite{CoatesNL11}.

\begin{table}[h!]
\begin{center}
\begin{tabular}{|c|c|c|} \hline
Dataset	& \ttt{60}{Classification \\ accuracy  with  \\ raw features}  	& \ttt{70}{Classification \\ accuracy with \\ $k$-means features}	\\ \hline
20news		& 0.9532										& 0.9510	\\ \hline %\cite{20news} 
adult			& 0.8527										& 0.8721	\\ \hline
ijcnn1		& 0.9167										& 0.9405	\\ \hline %\cite{ijcnn1}
letter			& 0.7581										& 0.7485	\\ \hline
magic04		& 1.0000										& 1.0000	\\ \hline
maptaskcoref	& 0.8894  										& 0.8955	\\ \hline
nomao		& 0.5846										& 0.5893	\\ \hline
poker		& 0.5436										& 0.6209	\\ \hline
shuttle		& 0.9247										& 0.9973	\\ \hline
skin			& 0.9247										& 0.9988	\\ \hline
vehv2binary	& 0.9666										& 0.9645	\\ \hline
w8all			& 0.9638										& 0.9635	\\ \hline
\end{tabular}
\end{center}
\caption{Corroborating the observations of \cite{CoatesNL11} we report that 
adding $k$-means feature, particularly to low dimensional datasets, is very beneficial for improving classification
accuracy. Indeed enabling online machine leaning to take advantage of this phenomenon is one of the motivations for performing $k$-means online.}
\label{table2}
\end{table}%

\subsection{The number of online clusters}
One of the artifacts of applying our online $k$-means algorithm is that the number of clusters is not exactly known a priory.
But as we see in Figure~\ref{fig1}, the number of resulting clusters is rather predictable and controllable.
Figure~\ref{fig1} gives the ratio between the number of clusters output by the algorithm, $k_{actual}$, and the specified target $k_{target}$.
The results reported are mean values of $3$ runs for every parameter setting. 
The observed standard deviation of $k_{actual}$ is typically in the range $[0,3]$ and never exceeded $0.1\cdot k_{target}$ in any experiment.
Figure~\ref{fig1} clearly shows that the ratio $k_{actual}/k_{target}$ is roughly constant and close $1.0$. 
Interestingly, the main differentiator is the choice of dataset.

\begin{figure}[h!]
\begin{center}
\includegraphics[width=\myFigureWidth]{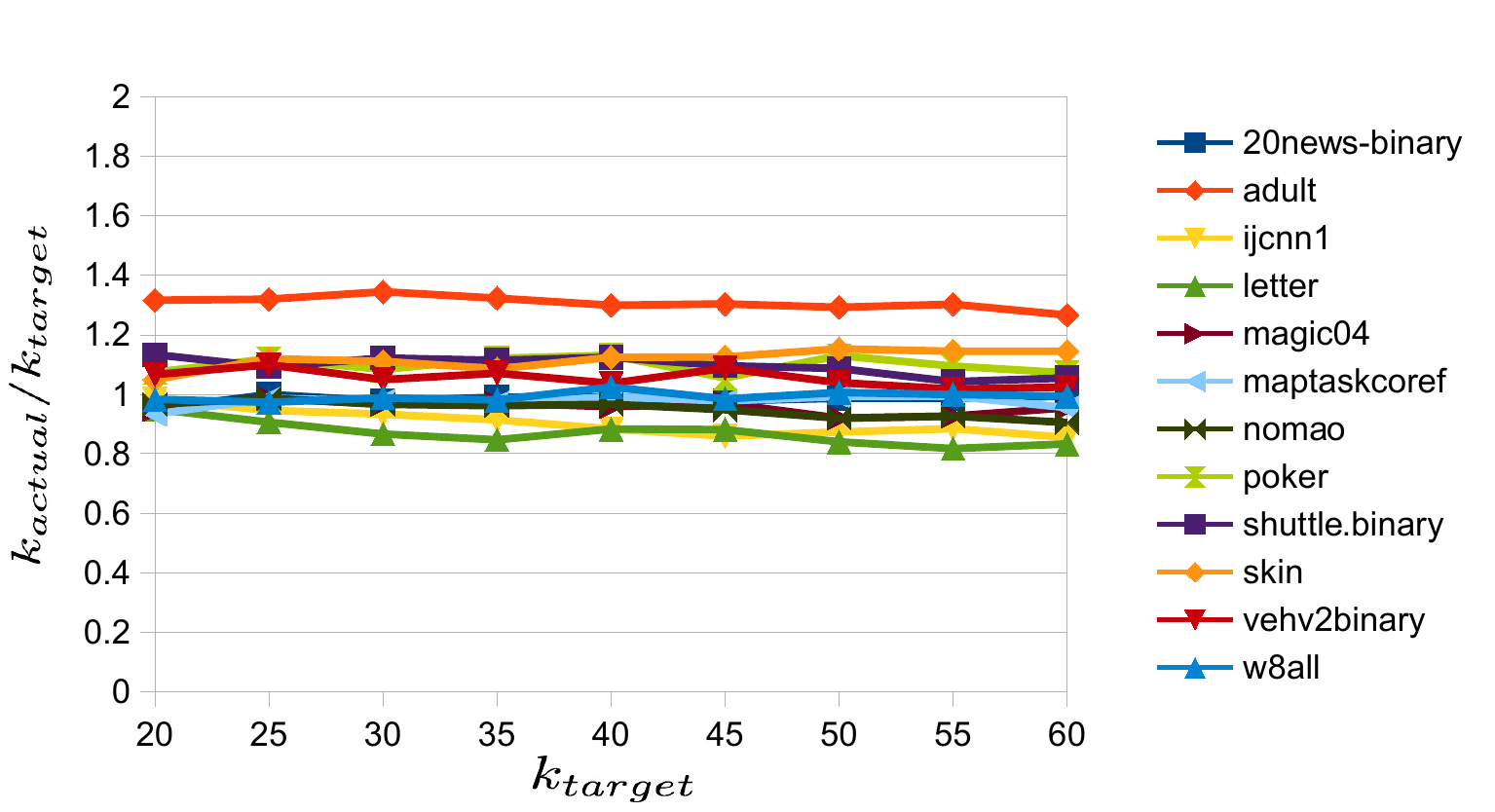}
\caption{The figure gives the ratio $k_{actual}/k_{target}$ on the $y$-axis as a function of $k_{target}$ on the $x$-axis.
The value $k_{target}$ is given to the algorithm as input and $k_{actual}$ is the resulting cardinality of the center set $C$.
We clearly see that this ratio is roughly constant and close $1$. Interestingly, the main differentiator is the dataset itself and not the value of $k_{target}$.
}
\label{fig1}
\end{center}
\end{figure}

\subsection{Online clustering cost}
Throughout this section, we measure the online $k$-means clustering cost with respect to different baselines. We report averages of at least 3 different independent executions for every parameter setting.
In Figure~\ref{fig2} the reader can see the online $k$-means clustering cost for the set of centers chosen online by our algorithm for different values of $k_{target}$ and different datasets.
For normalization, each cost is divided by $f_0$, the sum of squares of all vector norms in the dataset (akin to the theoretical $k$-means cost of having one center at the origin).
Note that some datasets are inherently unclusterable. Even using many cluster centers, the $k$-means objective does not decrease substantially.
Nevertheless, as expected, the $k$-means cost obtained by the online algorithm, $f_{online}$, decreases as a function of $k_{target}$. 

\begin{figure}[h!]
\begin{center}
\includegraphics[width=\myFigureWidth]{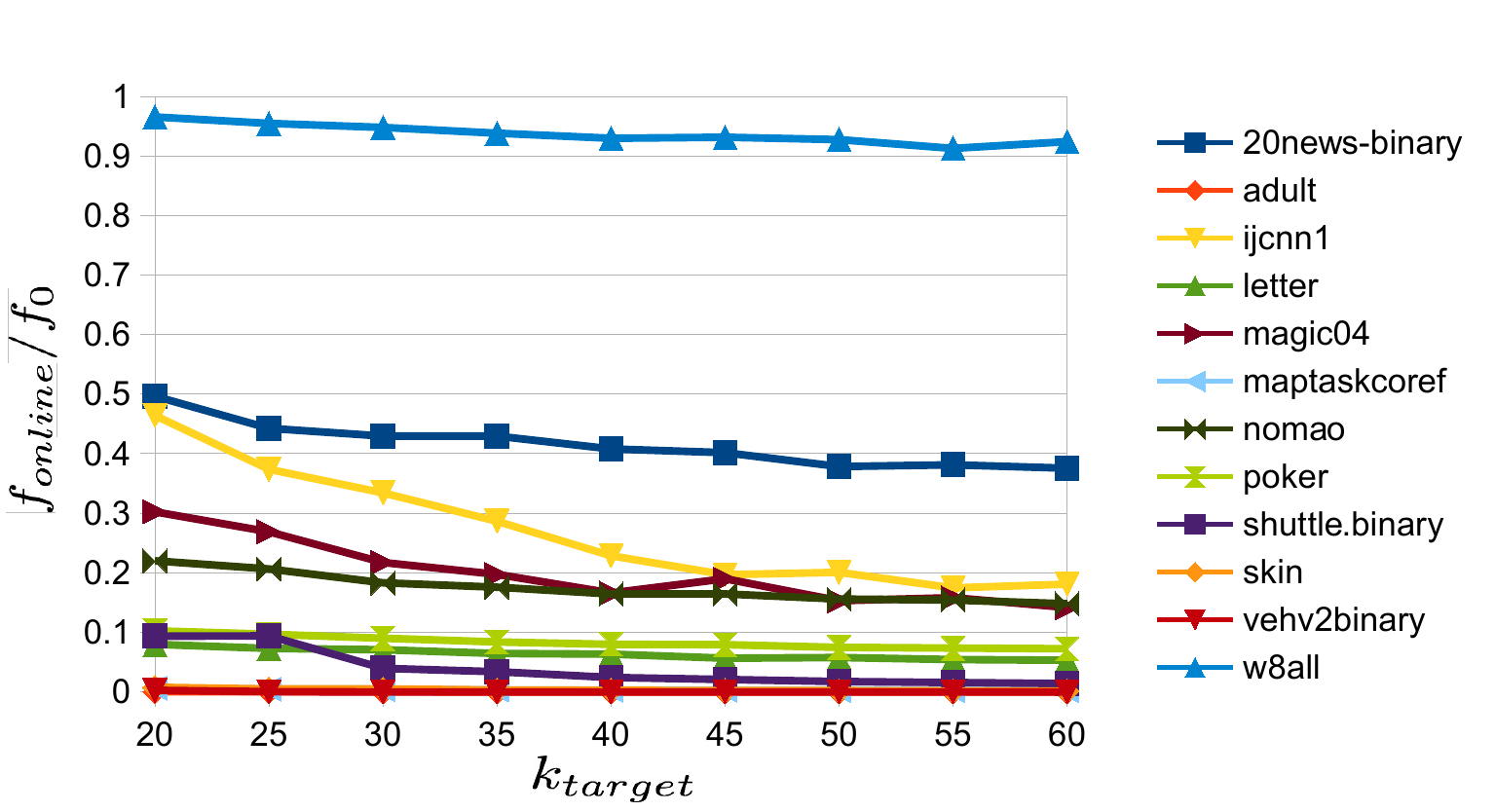}
\caption{Online $k$-means clustering cost ($f_{online}$) as a function of $k_{target}$ for the different datasets.
For normalization, each cost is divided by $f_0$, the sum of squares of all vector norms in the dataset (akin to the $k$-cost of once center in the origin).}
\label{fig2}
\end{center}
\end{figure}

The monotonicity of $f_{online}$ with respect to $k_{target}$ is unsurprising.
In Figure~\ref{fig3} we plot the ratio $f_{online}/f_{random}$ as a function of $k_{target}$.
Here, $f_{random}$ is the sum of squared distances of input points to $k_{actual}$  input points chosen uniformly at random (as centers).
Note that in each experiment the number of clusters used by the random solution and online $k$-means is identical, namely, $k_{actual}$.
Figure~\ref{fig3} illustrates something surprising. The ratio between the costs remains relatively fixed per dataset and almost independent to $k_{target}$.
Put differently, even when the $k$-means cost is significantly lower than picking $k$ random centers, they improve in similar rates as $k$ grows.

\begin{figure}[h!]
\begin{center}
\includegraphics[width=\myFigureWidth]{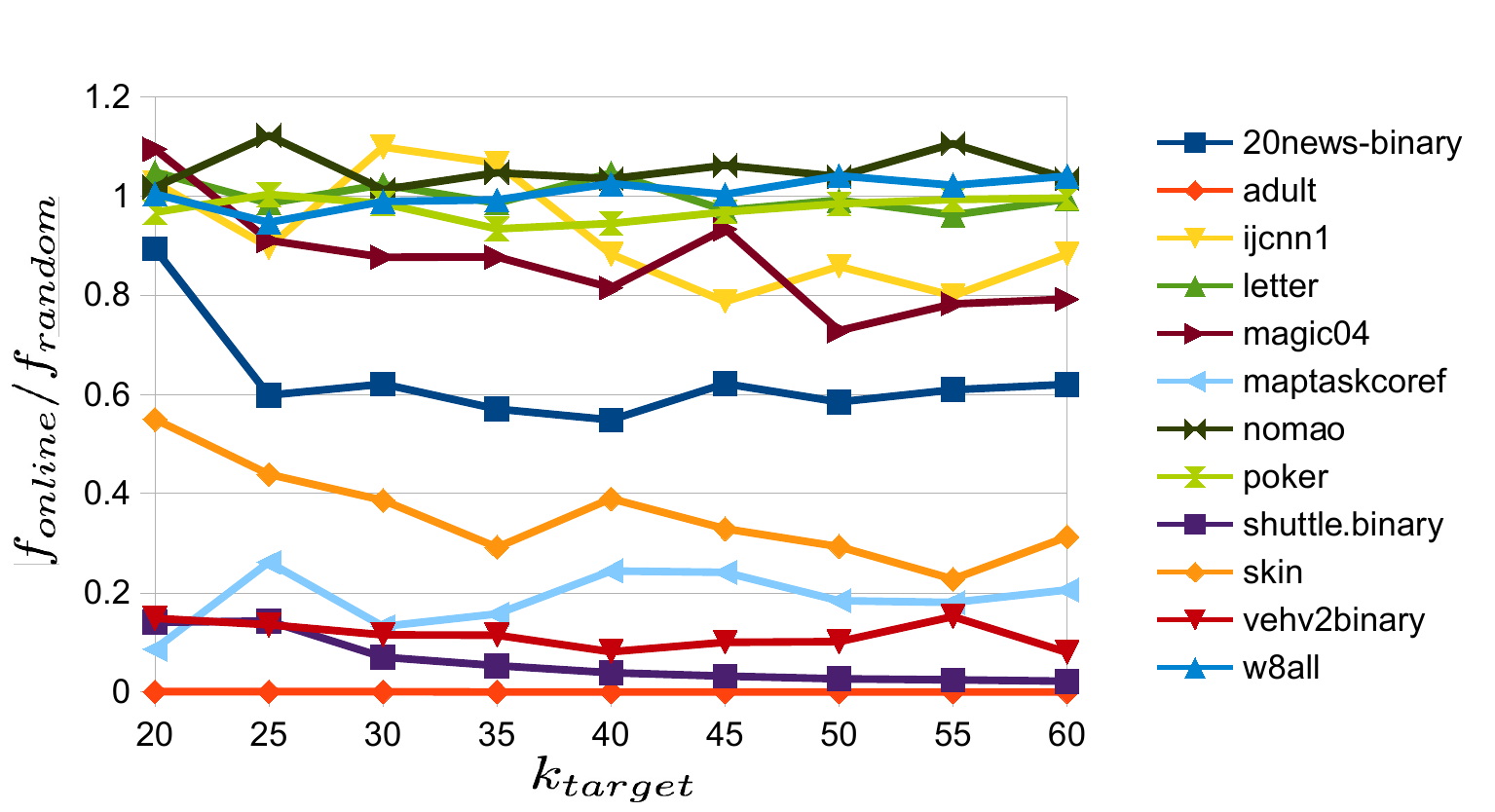}
\caption{On the $y$-axis, the value of $f_{online}$ divided by $f_{random}$. 
The latter is the cost of choosing, uniformly at random, as many cluster centers (from the data) as the online algorithm did.
A surprising observation is that this ratio is almost constant for each dataset and almost independent of $k_{target}$ (on the $x$-axis).}
\label{fig3}
\end{center}
\end{figure}

The next experiment compares online $k$-means to $k$-means++.
For every value of $k_{target}$ we ran online $k$-means to obtain both $f_{online}$ and $k_{actual}$.
Then, we invoke $k$-means++ using $k_{actual}$ clusters and computed its cost, $f_{kmpp}$.
This experiment was repeated $3$ times for each dataset and each value of $k_{target}$.
The mean results are reported in Figure~\ref{fig4}. 
Unsurprisingly, $k$-means++ is usually better in terms of cost. But, the reader should keep in mind that
$k$-means++ is an \emph{offline} algorithm that requires $k$ passes over the data compared with the online computational model of our algorithm.

\begin{figure}[h!]
\begin{center}
\includegraphics[width=\myFigureWidth]{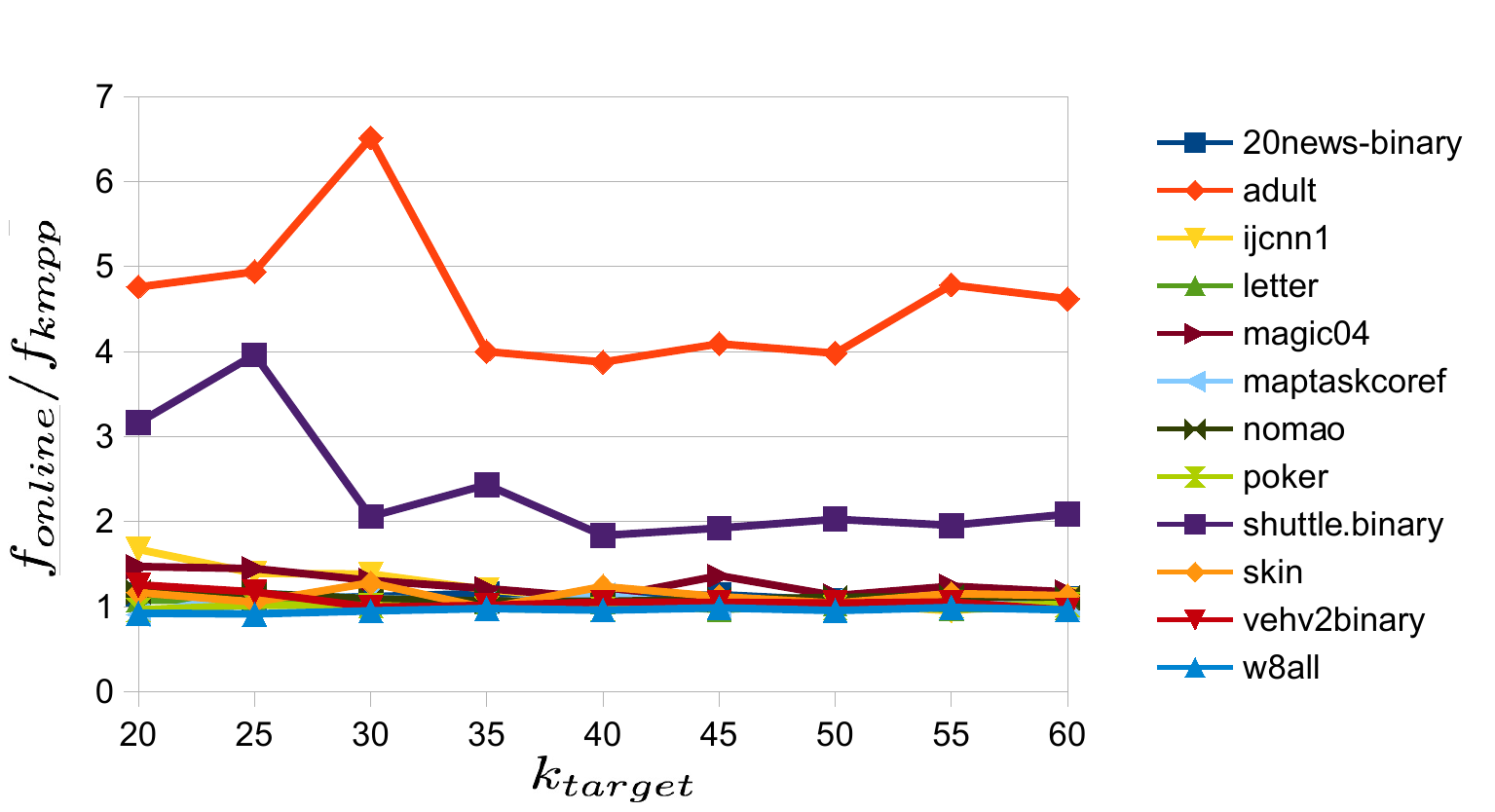}
\caption{On the $y$-axis we plot $f_{online}/f_{kmpp}$ as a function of $k_{target}$ on the $x$-axis. The values of $f_{online}$ is the cost of running Algorithm~\ref{algExperimental} with parameter $k_{target}$.
The value of $f_{kmpp}$ is the cost of running $k$-means++ with $k_{actual}$ clusters, $k_{actual}$ is the number of clusters online $k$-means actually used.
We see that, except for the datasets \emph{adult} and \emph{shuttle.binary}, $k$-means++ and online $k$-means are comparable. For \emph{adult} and \emph{shuttle.binary} online $k$-means is worse by a small constant factor. Note (Figure~\ref{fig3}) that both \emph{adult} and \emph{shuttle.binary} are datasets for which online $k$ means is dramatically better than random.}
\label{fig4}
\end{center}
\end{figure}

\begin{figure}[h!]
\begin{center}
\includegraphics[width=\myFigureWidth]{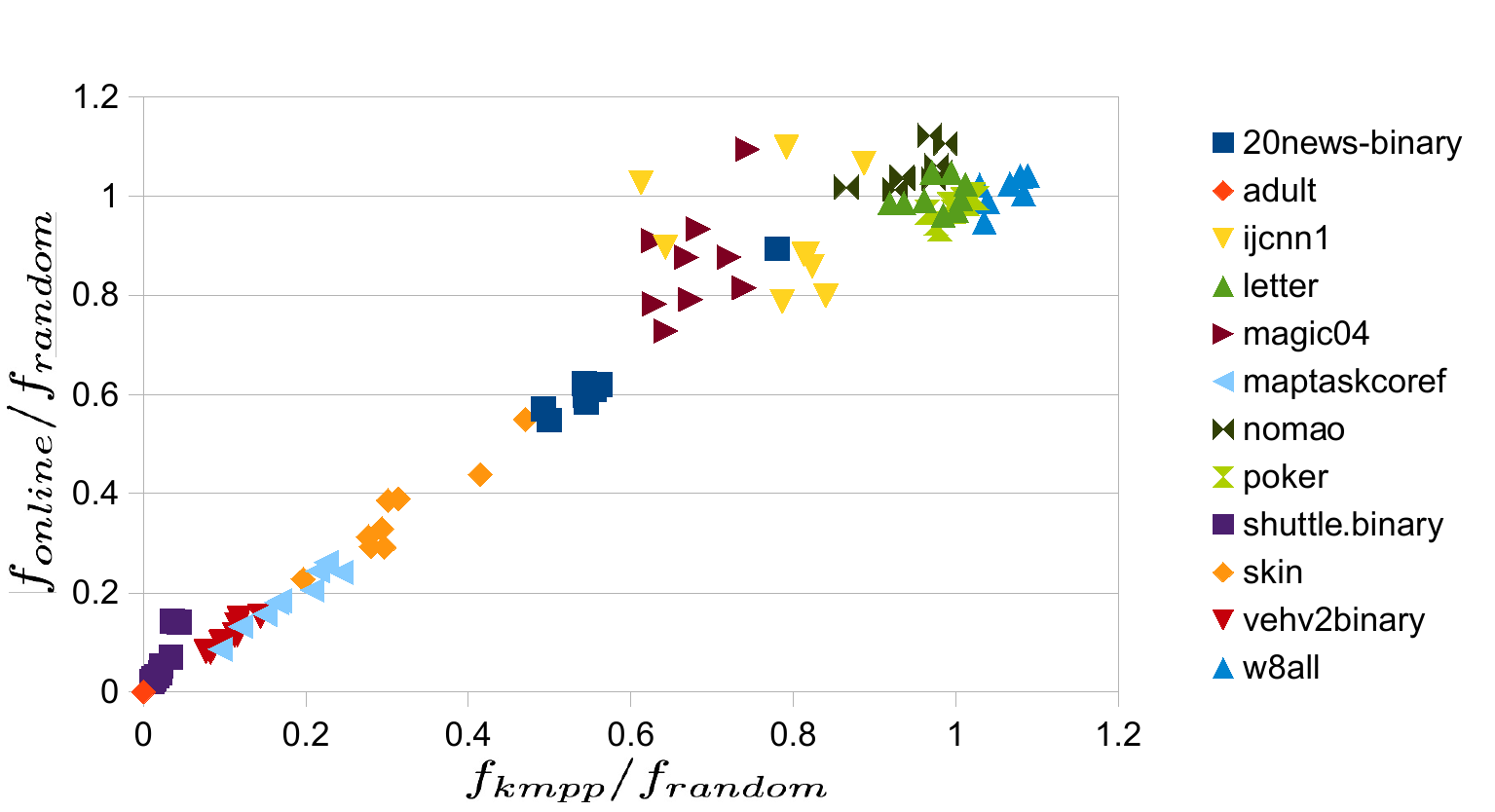}
\caption{On the $x$-axis $f_{kmpp}/f_{random}$ both using $k_{actual}$ clusters. 
The value of $k_{actual}$ is obtained by running online $k$-means with input $k_{target}$ on the $x$-axis.
The $y$-axis depicts $f_{online}/f_{random}$. 
Note that the performance of $k$-means++ and online $k$-means are very similar almost everywhere.
The advantage of $k$-means++ (see Figure~\ref{fig4}) occurs when the clustering cost is a minuscule fraction of random clustering.}
\label{fig5}
\end{center}
\end{figure}
  
\section{Aknowledgements}
We would like to thank Anna Choromanska and Sergei Vassilvitskii for very helpful suggestions and to Dean Foster for helping us with the proof of the Lemma \ref{random}.

\bibliographystyle{plain}
%\bibliography{KMeans}

\end{document}